\documentclass[sigconf]{acmart}

\usepackage{amsmath} 

\usepackage{array}
\usepackage{makecell}

\usepackage[english]{babel}
\newtheorem{theorem}{Theorem}[section]

\newtheorem{Assumption}[theorem]{Assumption}

\AtBeginDocument{%
  \providecommand\BibTeX{{%
    \normalfont B\kern-0.5em{\scshape i\kern-0.25em b}\kern-0.8em\TeX}}}

\setcopyright{acmcopyright}
\copyrightyear{2024}
\acmYear{2024}
\acmDOI{XXXXXXX.XXXXXXX}

\acmConference[ACM e-Energy 2024]{ACM e-Energy 2024}{June 04--07,
  2024}{Singapore}
%
%
\acmPrice{15.00}
\acmISBN{978-1-4503-XXXX-X/18/06}




\begin{document}

\title{Contributions of Individual Generators to \\ Nodal Carbon
  Emissions}

\author{Yize Chen}
\email{yizechen@ust.hk}
\affiliation{%
  \institution{Hong Kong University of Science and Technology} \city{Hong Kong SAR} \country{China}
}
\author{Deepjyoti Deka}
\email{deepjyoti@lanl.gov}
\affiliation{%
  \institution{Los Alamos National Laboratory}
  \city{Los Alamos}
  \state{NM}
  \country{USA}
}
\author{Yuanyuan Shi}
\email{yyshi@eng.ucsd.edu}
\affiliation{%
  \institution{University of California San Diego}
  \city{San Diego}
  \state{CA}
  \country{USA}
}

\begin{abstract}
Recent shifts toward sustainable energy systems have witnessed the fast deployment of carbon-free and carbon-efficient generations across the power networks. However, the benefits of carbon reduction are not experienced evenly throughout the grid. Each generator can have distinct carbon emission rates. Due to the existence of physical power flows, nodal power consumption is met by a combination of a set of generators, while such combination is determined by network topology, generators' characteristics and power demand. This paper describes a technique based on physical power flow model, which can efficiently compute the nodal carbon emissions contributed by each single generator given the generation and power flow information. We also extend the technique to calculate both the nodal average carbon emission and marginal carbon emission rates. Simulation results validate the effectiveness of the calculations, while our technique provides a fundamental tool for applications such as carbon auditing, carbon-oriented demand management and future carbon-oriented capacity expansion.

\end{abstract}

\maketitle

\section{Introduction}
Mitigating climate change has emerged as an essential task for modern electric power systems via reliably generating and delivering low-carbon power to customers. Measuring and evaluating carbon emissions has recently become an integral process en route to decarbonizing energy systems, drawing widespread attention from a full spectrum of industries~\cite{anthony2020carbontracker, dixon2020scheduling, jenkins2009interventions} and policymakers~\cite{nelson2012high, liu2022challenges, rudkevich2011locational}. Given carbon emission rates, various applications have been discussed to reduce system carbon emissions via data center scheduling \cite{lin2023adapting, lindberg2022using}, demand-side management~\cite{park2023decarbonizing, wang2021optimal}, energy storage and electric vehicle load shifting~\cite{cheng2022carbon}, and power grid resource planning~\cite{abdennadher2022carbon,chen2023carbon}. 

To implement carbon reduction actions and plan for a more sustainable energy system, quantifying and auditing the fine-grained carbon emission rate serves as a cornerstone. In current practice, the \emph{system-level carbon emissions} are calculated by summing up all generators' carbon emissions~\cite{hundiwale2016greenhouse}. For instance, ``virtual'' carbon flow has been proposed to measure the transfer of carbon emission between different geographical regions~\cite{elkins2001carbon}. And pure \emph{statistical methods} such as \cite{leerbeck2020short, lau2014modelling} focus on short-term carbon emission forecasting task using statistical features including weather conditions and load. However, these methods output the whole network's marginal and average carbon emissions rather than nodal ones. 

Tools that can quantify nodal-level carbon emissions are becoming increasingly important. 
Such tools can provide system operators and energy users with real-time carbon emission information~\cite{park2023decarbonizing,cheng2019low}, and guide carbon reduction plans by identifying each generator's emission profiles~\cite{van2012assessing}.
Both the nodal \emph{average carbon emission} and \emph{marginal carbon emission} (or locational marginal emission) rates are useful metrics, where the average value quantifies the overall carbon emission rate of nodal power consumption, while the latter reflects the sensitivity with respect to nodal power demand. 

Due to the network structure, time-varying loads, and complex physical constraints in power gird, tools for analyzing nodal carbon emissions are still under investigation. By using a lookup table of marginal emission rate versus load value, a load control strategy is established in \cite{wang2014locational} for carbon reduction. The load shifting problem in \cite{lindberg2022using, lindberg2021guide} considers carbon reduction by quantifying nodal marginal emission rate, while it is limited to a neighborhood load region with respect to original load vector. Techniques based on implicit function theorem is developed in \cite{valenzuela2023dynamic} to calculate the Jacobian and associated solution map from demand vector to generator vector, thereafter the marginal carbon emission is calculated. As for nodal average carbon emission, \cite{li2013carbon, kang2015carbon} propose analytical models of carbon emission flow, yet the proposed iterative algorithm needs to update the estimated carbon emissions from respective generators with no convergence guarantees. Specifically, it calculates each node's power source mix, and resorts to relatively expensive matrix inversion to find the carbon emission mapping from generators to demands. Yet such matrix is not guaranteed to be invertible~\cite{cheng2005ptdf}, which can lead to unsolvable cases of carbon emission rates. Recent work incorporates such calculation into an optimal power flow framework with carbon emission constraints~\cite{chen2023carbon}. Our technique differentiates from \cite{kang2015carbon,chen2023carbon} by directly finding individual generator's contribution to each line and load, which is inversion-free and computationally efficient via depth-first search. Moreover, our algorithm can be readily used for estimating both the average and locational marginal emissions.

In this work, we propose an efficient algorithm which can calculate the exact nodal average carbon emission and marginal carbon emission rates. We focus on the carbon flow physically coupled with power flow, and develops a recursive algorithm to trace back each generator's carbon and power contribution with respect to each line flow and node. Interestingly, the resulting framework adapts depth-first tree search to find each generator's reachable sets with mild computation burden. The algorithm is network agnostic, and can be applied to different load conditions.  Simulation results validate the algorithm's efficiency in both determining each generator's contributions with respect to any node and quantifying nodal carbon emissions, so that policymakers, system operators and customers can better analyze the grid's carbon emission profiles. To facilitate future carbon-oriented task developments, we make our code publicly available at \url{https://github.com/chennnnnyize/Carbon_Emission_Power_Grids}.



\section{Methodology}
\subsection{Power Dispatch Model}
We consider a connected power network with $N$ nodes and $L$ power lines. Let $\mathcal{N}$ be the set of all nodes, $\mathcal{L}$ the set of lines and $\mathcal{G}$ the set of generators. At each timestep, given load vector  $\mathbf{p}^d \in \mathbb{R}^N$, we assume system operator solves the electricity market dispatch problem and finds the power dispatch $\mathbf{p}^g \in \mathbb{R}^K$ for $K$ generators. The corresponding line flow $\mathbf{p}_{line}\in \mathbb{R}^L$ is also solved. Without loss of generality, for line pair $(i,j)\in \mathcal{L}$, we use $p_{ij}$ to denote the line flow from node $i$ to node $j$. We refer to Appendix \ref{sec:opf} for the dispatch model of DC power flow in details. We use $\mathcal{N}^+_i, \mathcal{N}^-_i$ to denote the set of neighbor nodes that send power to and receive power from node $i$, respectively.

This paper aims at finding the carbon emissions associated with each load node. Mathematically, let $\gamma_k$ denote the carbon emissions rate of generator $k$. We want to recover the locational average carbon emission rate $\delta(p_i^d)$ and marginal rate $\mu(p_i^d)$. Both rates have a unit of lbs CO$_2$/MWh. 

To achieve this goal, we find it possible to follow the physical interpretations of average carbon emission rate $\delta(p_i^d)$ by taking the division of nodal total carbon emission by power demand $p_i^d$. The key for computing the total carbon emission is to find generator $k$'s contribution $p_i^d(p_k^g)$ to supply $p_i^d$ in MW\footnote{Throughout the paper, we use $p_i(x)$ to denote the power contributed by $x$ to node $i$, where $x$ can be generator or inflow. Similar definitions hold for line flow $p_{ij}(x)$}, then we can compute the total emission as $e(p_i^d)=\sum_{k=1}^{K} \gamma_k \cdot p_i^d(p_k^g)$. Thus, each node's average carbon emission rate (with $p_i^d>0$) is computed as,
\begin{equation}
\label{equ:emission}
    \delta(p_i^d)=\frac{\sum_{k=1}^{K} \gamma_k \cdot p_i^d(p_k^g)}{p_i^d}.
\end{equation}

\subsection{Tracing Individual Generator's Carbon Flow}
\label{sec:method}
Now we discuss how to find such generator $k$'s contribution $p_i^d(p_k^g)$ to node  $i$'s demand for any $(k,i)$ pair in the network.  Let $p_{ij}(p_k^g)$ denote the power flow on the line between node $i$ and node $j$ due to generator $k$. To realize the calculation of $p_i^d(p_k^g)$, we utilize the fact that it is possible to trace every line's power contribution $p_{ij}(p_k^g)$. After that, we can calculate each node's ``inflow mix'', which can be further tracked back to each node's generation mix.

\begin{figure}[]
\centering
\includegraphics[width=0.48\textwidth]{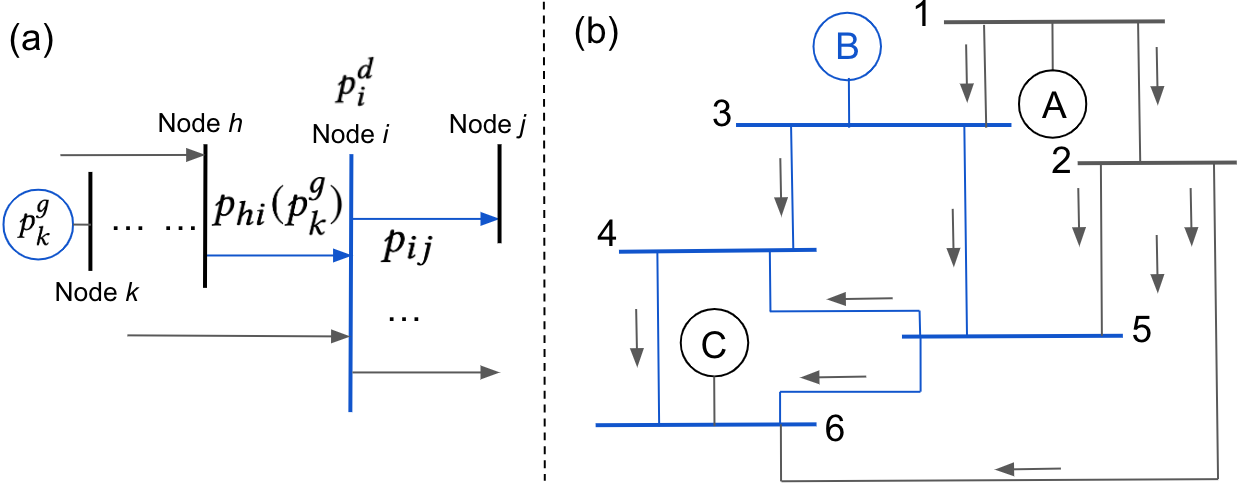}
\caption{(a). The overview of calculating generator's contribution at each node and line flow; (b). 6-bus example. The reachable path and nodes for generator $B$ is marked in blue.}
\label{fig:flow_example}
\end{figure}

In the following, we present an assumption on the proportional allocation of power in the network:
\begin{Assumption}
\label{assumption}
For any node $i$, if the proportion of the inflow which can be traced to generator $k$ is $\alpha_{i}(p_k^g)$, then the proportion of the outflow which can be traced to generator $k$ is also $\alpha_i(p_k^g)$.
\end{Assumption}

This assumption guides the proportional share of generators' generated power at each node. Essentially, the physically consumed power does not belong to any generator, while Assumption \ref{assumption} provides a principle to proportionally allocate the inflow power by nodal demand and output power flow. As illustrated in Fig. \ref{fig:flow_example}(a), we analyze the contribution made by generator $k$ to node $i$ via line $hi$. For any node in the network, provided we know $p_{hi}(p_k^g)$, we can explicitly write out the share of inflow power $p_{hi}(p_k^g)$ on nodal demand and outflow as 
\begin{subequations}
\label{equ:share}
\begin{align}
            p_i^d(p_{hi}(p_k^g))=& p_{hi}(p_k^g) \cdot \frac{p_i^d}{{p_i^d+\sum_{j\in \mathcal{N}^-_i}}p_{ij}};\\
        p_{ij}(p_{hi}(p_k^g))=& p_{hi}(p_k^g) \cdot \frac{p_{ij}}{{p_i^d+\sum_{j\in \mathcal{N}^-_i}}p_{ij}}.
\end{align}
\end{subequations}

The denominators in Equ \eqref{equ:share} are nonnegative as long as the load or line flow are non-negative, making the calculation of nodal power share always valid. An illustration of such calculation is shown in Fig. \ref{fig:6bus}(a), with elements marked in blue denoting the calculations involved in \eqref{equ:share}. By summing up all inflow's share from generator $g$, we can then calculate each node's power demand contributed by generator $g$ as  
\begin{equation}
\label{equ:summation}
    p_i^d(p_k^g)=\sum_{h\in \mathcal{N}^+_i} p_i^d(p_{hi}(p_k^g)).
\end{equation}

By implementing the nodal line flow proportional share from Equ \eqref{equ:share} and summing the line flows with Equ \eqref{equ:summation} for each generator $k$, we are ready to use Equ \eqref{equ:emission} for computing $\delta(p_i^d)$.
Now the remaining challenge lies on how we can efficiently find each line's power flow  $p_{hi}$ contributed by generator $k$. It turns out as long as the line is reachable by generator $k$, e.g., there is a directed path from node $k$ to line $hi$, then line $hi$ shares a proportion of power provided by this generator. Starting from each generator, we can thus implement Equ \eqref{equ:share} recursively to find $p_{hi}(p_k^g)$. We find the following observation useful for the implementation of our carbon tracing algorithm.

\begin{proposition}
\label{prop:reachable}
A reachable set of generator $k$ is defined as $\mathcal{S}_k:=\{i \in \mathcal{N}|p_i^d(p_{k}^g)\neq 0\}$. Given network demand $\mathbf{p}^d$ and generation $\mathbf{p}^g$, for any node $i$, it  belongs to at least one reachable set $S_k, \; g \in \mathcal{G}$.
\end{proposition}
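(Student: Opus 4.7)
The plan is to reduce the claim to the combination of a reachability fact on the directed flow graph and the proportional-sharing Assumption \ref{assumption}. I would first note that the statement must implicitly restrict to nodes with $p_i^d>0$: by \eqref{equ:share}–\eqref{equ:summation}, whenever $p_i^d=0$ the numerator in the share of nodal demand is zero, so $p_i^d(p_k^g)=0$ for every $k$ and $i$ cannot lie in any $\mathcal{S}_k$. I would therefore fix an arbitrary load bus $i_0$ with $p_{i_0}^d>0$ and exhibit a generator $k^\star$ with $p_{i_0}^d(p_{k^\star}^g)>0$.

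The structural step I would carry out next is to argue that the directed graph $G_f=(\mathcal{N},E_f)$ with $E_f=\{(i,j):p_{ij}>0\}$ is a DAG. Appealing to the DC dispatch model in Appendix \ref{sec:opf}, each line flow satisfies $p_{ij}=b_{ij}(\theta_i-\theta_j)$ with $b_{ij}>0$, so $(i,j)\in E_f$ forces $\theta_i>\theta_j$; a directed cycle in $G_f$ would then require a strictly decreasing sequence of angles returning to its start, which is impossible. This gives a topological order on $\mathcal{N}$ consistent with positive flow direction.

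I would then construct a backward path from $i_0$ to a generator. By nodal power balance, $\sum_{h\in\mathcal{N}^+_{i_0}}p_{h i_0}+p^g_{i_0}=p^d_{i_0}+\sum_{j\in\mathcal{N}^-_{i_0}}p_{i_0 j}$; since $p^d_{i_0}>0$, either $p^g_{i_0}>0$ (a generator is collocated at $i_0$ and we are done) or there is some $h_1\in\mathcal{N}^+_{i_0}$ with $p_{h_1 i_0}>0$. Iterating this dichotomy at $h_1$ and its predecessors produces a sequence $i_0,h_1,h_2,\dots$ whose consecutive pairs are edges of $G_f$; acyclicity plus finiteness of $\mathcal{N}$ force the walk to terminate at some $h_m$ with strictly positive local generation from a generator $k^\star$.

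Finally, I would propagate the contribution forward along $h_m\to h_{m-1}\to\cdots\to h_1\to i_0$ using Assumption \ref{assumption}. At $h_m$ the proportion $\alpha_{h_m}(p^g_{k^\star})>0$, so $p_{h_m h_{m-1}}(p^g_{k^\star})>0$ by \eqref{equ:share}; recursively, as long as the carried share is positive and the outgoing line on the path carries positive flow, \eqref{equ:share} transmits a strictly positive share to the next node; summing over predecessors via \eqref{equ:summation} cannot destroy positivity, so $p_{i_0}^d(p_{k^\star}^g)>0$ and $i_0\in\mathcal{S}_{k^\star}$. The main obstacle I expect is the acyclicity step: without the DC angle-difference representation one cannot rule out circulating flows, and the backward walk could loop; invoking the DC model is therefore essential to ensure termination at a generator in finitely many steps.
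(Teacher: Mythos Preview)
Your argument is correct, but it takes a much more constructive route than the paper. The paper's proof is a two-line contradiction: if $i\notin\mathcal{S}_k$ for all $k$, then $p_i^d(p_k^g)=0$ for every generator, hence no power is injected at $i$, contradicting nodal power balance. It does not explicitly build the DAG, trace a backward path, or invoke the proportional-sharing recursion; it simply asserts that vanishing contributions from all generators are incompatible with a nonzero demand. Your approach, by contrast, (i) isolates the implicit hypothesis $p_i^d>0$ that the paper leaves unstated, (ii) establishes acyclicity of the flow graph (which the paper defers to Appendix~\ref{sec:cycle}), and (iii) constructively exhibits a specific generator $k^\star$ with $p_{i_0}^d(p_{k^\star}^g)>0$ by walking backward to a source and then pushing positivity forward via \eqref{equ:share}--\eqref{equ:summation}. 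What your version buys is rigor and an explicit witness; what the paper's version buys is brevity, at the cost of tacitly assuming that $\sum_k p_i^d(p_k^g)=p_i^d$, which is itself a consequence of the recursive allocation rather than something immediately given.
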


\begin{proof}
We prove it by contradiction. If for node $i$, $i\notin \mathcal{S}_k, \forall k\in \mathcal{G}$, then by the definition of reachable set, we have $p_{i}^d(p_k^g)=0$ for all $k \in \mathcal{G}$, which means no injected power at node $i$. Then the power balance does not hold for node $i$, which is a contradiction. 
\end{proof}

\begin{figure*}[]
\centering
\includegraphics[width=1.0\textwidth]{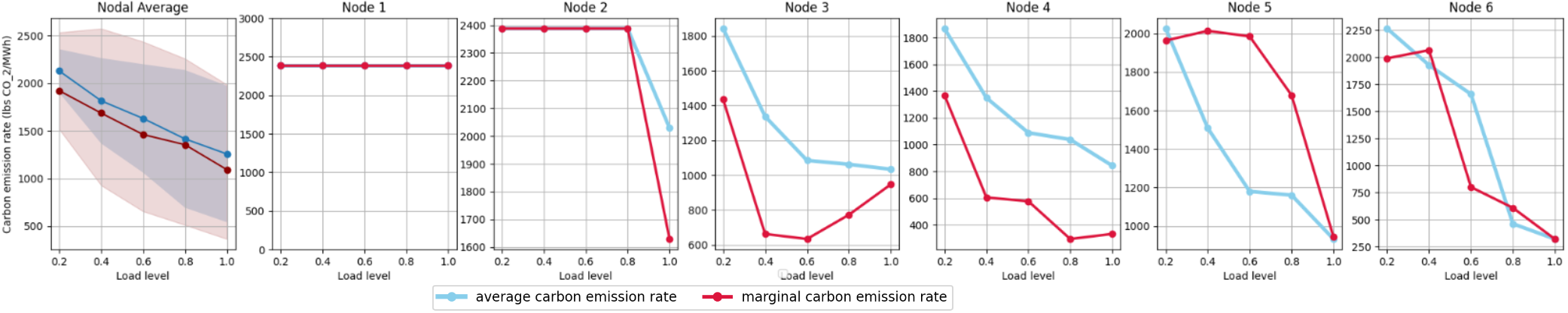}
\caption{The marginal and average carbon emission rate for each node in 6-bus example. The leftmost figure illustrate the mean and variance of average and marginal carbon emission rate.  }
\label{fig:6bus}
\end{figure*}

Based on Proposition \ref{prop:reachable}, the task boils down to constructing the reachable set for every generator with nonzero generations. We now show how we use directed line flow $p_{ij}$ to connect such set as well as tracing generator's contribution. Essentially, such reachable set for generator $k$ can be represented as a tree structure rooted at node $k$. This tree is a directed acyclic graph, where the path direction is determined by line flow direction. We note that due to the physics of power flow, there exists no directed cycle in the graph, thus the algorithm presented in this section will always have termination point. The proof is described in Appendix \ref{sec:cycle}.

For our tree-search algorithm, the termination is reached when all the reachable nodes are visited by the path belonging to the tree. An illustrative example is given in Fig. \ref{fig:flow_example}, where we mark the reachable set for generator $B$ in blue. To construct the tree, we start from node 3, find the directed line flow $p_{34}$ and $p_{35}$ first. It is then possible to compute the generation power shared by line 34 and line 35. Similarly, at node $i$ reached by inflow line $hi$ (node 5 and node 6 in this case), we identify the outflow line $ij, \forall j \in \mathcal{N}^-$ ($p_{54}$, $p_{56}$ and $p_{46}$). The algorithm continues working on newly added lines until all such lines are identified. 

Such tree structure can be retrieved by depth-first search. More importantly, once every node is reached during the search process by line $hi$, we can use Equ \eqref{equ:share} to calculate the power share provided by the inflow $p_{hi}$. By recursively implementing the power sharing calculation along with the tree search, once we traverse all nodes and edges in the directed tree for generator $k$, we know the exact contribution of this generator to each node (and each line flow). We implement such procedures for all generators. Then, for each node $i$, we can sum up all generators' carbon emission contributions and get total emission $e(p_i^d)$. For each tree, the complexity is $O(|N|+|L|)$ for the depth-first search, and the additional complexity comes from calculating each node and line flow's power share via Equ \eqref{equ:share}, and the worst case is $O(|N||L|)$. Combined together, this is still more efficient than previous approaches with matrix inversion involved~\cite{kang2015carbon, li2013carbon}. In addition, the proposed approach can be implemented in parallel by indexing and sharing the paths and nodes in all generators' trees. 

Our approach is inspired by \cite{kirschen1997contributions}, where the authors proposed a novel method for identifying the common node clusters given line flow patterns, while each cluster shares the same group of generators' inputs. Our method differs from \cite{kirschen1997contributions} in directly working on single generator's reachable set, which avoids the computation for finding clusters. We also bridge the contribution of generation power and carbon emissions, and find a  numerical way to compute the marginal carbon emission rate in the next subsection.

We note that the directed trees can be different under different load conditions. For instance, in Fig. \ref{fig:flow_example}(b), if generator A has the lowest generation cost, generator A will supply all nodes in the graph when total load is small. While due to line flow limits and generation limits, Generator B and C will start to serve more nodes when load increases. Thus it is important to profile the line flow directions and construct each generator's reachable set.

\subsection{From Average to Marginal Carbon Emission }
\label{sec:marginal}
Mathematically, locational marginal emission rates can be derived by first calculating how generation changes with respect to nodal demand change, and then multiplying the change in generation by the emission rates of each generator. In our framework, we can implement a sensitivity analysis to estimate such marginal rate for each node. We calculate the nodal total emission before and after a small demand value change $\epsilon$ at node $i$, while keeping loads at all other nodes fixed. The locational marginal emission at node $i$ can be calculated as 
\begin{equation}
\label{equ:marginal}
    \mu(p_i^d)=\frac{d e(p_i^d)}{d p_i^d} \approx \frac{e(p_i^{d+\epsilon})-e(p_i^d)}{\epsilon}.
\end{equation}
During implementation, we get $e(p_i^{d+\epsilon})$ by running the full algorithm to update the carbon contribution by each generator. We note that such calculation based on load perturbation is exact for almost all load vectors. This is due to the fact that in the underlying power dispatch model, there are particular load regions which share the same power dispatch rules. 
For a small $\epsilon$, as long as $p_i^{d+\epsilon}$ lies in the same load region as $p_i^d$, our calculation \eqref{equ:marginal} gives the exact marginal carbon emission rate. Such characteristic has been applied to locational marginal price analysis, and we refer more discussions in Appendix. \ref{sec:region} and mathematical details to \cite{ji2016probabilistic}.

The following observation is instantly made with respect to a group of nodes. Utilizing the observation can accelerate the carbon emission calculation by skipping such nodes. 

\begin{proposition}
For any node $i$ with no inflow , e.g., $p_{hi}=0, \forall h\in \mathcal{N}^+_i $, then we have $\delta(p_i^d)=\mu(p_i^d)=\gamma_i$, where $\gamma_i$ is the carbon emission rate of generator located at node $i$. 
\end{proposition}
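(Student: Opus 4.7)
The plan is to prove both equalities by directly invoking the tracing procedure of Section \ref{sec:method} together with the load-region reasoning of Section \ref{sec:marginal}. The key observation is elementary: if node $i$ receives no inflow, then power balance forces the entire demand $p_i^d$ to be supplied by the generator located at node $i$, and by the definition of the tracing procedure no other generator's contribution can reach node $i$.

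First I would handle the average emission rate. Since $p_{hi}=0$ for every $h\in \mathcal{N}^+_i$, each inflow line carries zero total power, so in particular it carries zero power attributable to any upstream generator. Applying Equ \eqref{equ:share} at node $i$ shows that $p_i^d\bigl(p_{hi}(p_k^g)\bigr)=0$ for every inflow line and every $k$, hence Equ \eqref{equ:summation} gives $p_i^d(p_k^g)=0$ for all $k\neq i$. Power balance at node $i$ then forces $p_i^d(p_i^g)=p_i^d$, and substituting into Equ \eqref{equ:emission} yields
\begin{equation*}
\delta(p_i^d)=\frac{\gamma_i\,p_i^d}{p_i^d}=\gamma_i.
\end{equation*}

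For the marginal rate I would invoke the load-region argument from Section \ref{sec:marginal} (with details in Appendix \ref{sec:region}). For sufficiently small $\epsilon$ the perturbed dispatch stays in the same load region, so the line flow directions at node $i$ are preserved and in particular the no-inflow property still holds. The previous argument then applies verbatim to the perturbed dispatch and gives $e(p_i^{d+\epsilon})=\gamma_i(p_i^d+\epsilon)$ and $e(p_i^d)=\gamma_i p_i^d$; plugging into Equ \eqref{equ:marginal} produces $\mu(p_i^d)=\gamma_i$.

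The main obstacle is the marginal case: I need the no-inflow structure at node $i$ to persist under the small load perturbation. For generic load vectors in the interior of a load region this follows from the characterization cited in the paper, but on a boundary one could imagine a previously inactive inflow line becoming active under the perturbation, which would invalidate the identity. The cleanest treatment is therefore to state the result for load vectors in the interior of a load region, which is the same qualifier that underlies the validity of Equ \eqref{equ:marginal} itself. The average-rate part, by contrast, requires no such qualification and follows immediately from proportional sharing and power balance.
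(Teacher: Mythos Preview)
Your argument is correct and is exactly the natural one. The paper does not actually supply a proof for this proposition; it merely introduces it with the remark that ``the following observation is instantly made,'' leaving the verification to the reader. Your write-up fills in precisely the details the authors omit: using Equ~\eqref{equ:share}--\eqref{equ:summation} to force $p_i^d(p_k^g)=0$ for $k\neq i$, invoking nodal power balance to get $p_i^d(p_i^g)=p_i^d$, and then appealing to the load-region invariance of Section~\ref{sec:marginal} (Appendix~\ref{sec:region}) to handle the marginal rate. Your caveat about needing $\mathbf{p}^d$ to lie in the interior of a load region is appropriate and matches the implicit assumption already underlying Equ~\eqref{equ:marginal}.
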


 \section{Illustrative Example}
We test on IEEE 6-bus (in Fig.~\ref{fig:flow_example}) and 30-bus system (in Fig.~\ref{fig:30bus_map}) to validate our carbon emission calculation framework. We use the 6-bus example to quantify how load change will affect both the average and marginal carbon emissions. For the 30-bus system, we share some observations about the overall system emissions and nodal patterns. To enable the carbon emission analysis, we first solve the power dispatch problem to find the optimal generation dispatch solutions and line power flows. The IEEE 30-bus system with 6 generators~\cite{shahidehpour2003appendix} represents a portion of the Midwestern United States Electric Power System.  We utilize the generator data collected for the U.S. Midwest Reliability Organization (MRO) region in \cite{deetjen2019reduced}, and get realistic generators' emission rate, generation fuel type, and annual generation data. We simulate the case where only fossil fuel generators are used to power the grid. The generators' carbon emission rates are in the range of $[113, \; 2388]$ lbs CO$_2$/MWh, and in general, generators with higher generation costs (e.g., gas and nuclear) correspond to lower carbon emission rate in our simulating instances. For both systems, we firstly identify a load condition which approaches system limits. Then we vary the load vectors by taking $20\%, 40\%, 60\%, 80\%$ of such load, and calculate the average and marginal carbon emission rates respectively. More simulation setting details are described in Appendix \ref{sec:setting}.

\begin{figure}[]
\centering
\includegraphics[width=0.4\textwidth]{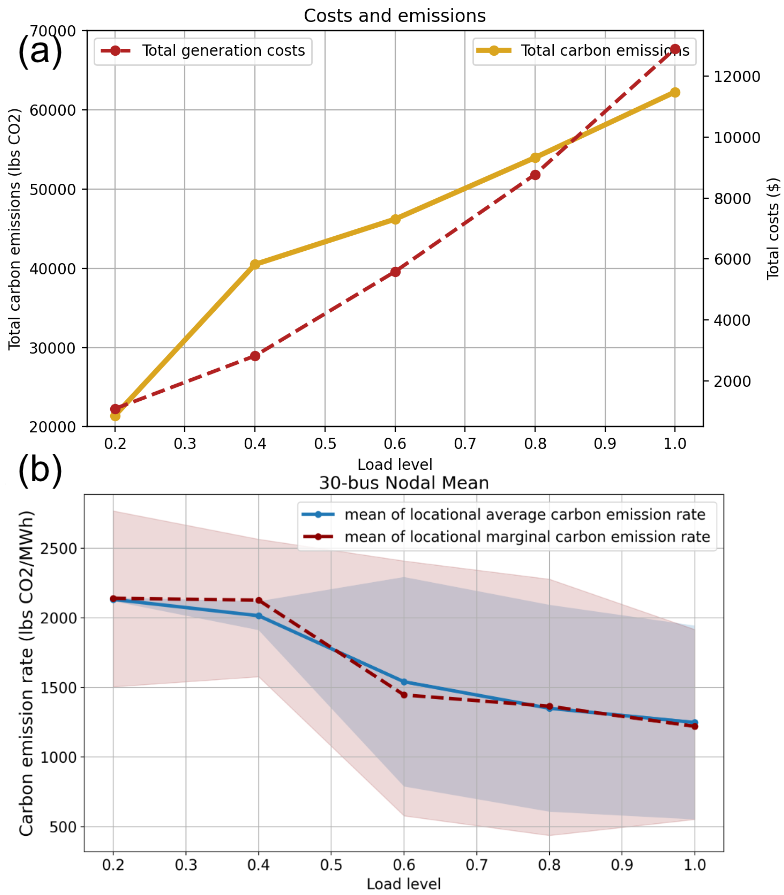}
\caption{IEEE 30-bus system simulation results. (a). Total carbon emissions and generation costs; (b). Mean of all nodes' average carbon emissions and marginal carbon emissions.}
\label{fig:30bus}
\end{figure}

In Fig. \ref{fig:6bus}, we show the average and marginal carbon emissions for each node, along with the mean of all nodes' carbon emission metrics. Each node's carbon emission rate differs a lot, with Node 1 having the highest emission rates in terms of both the average and marginal emissions, since this load is always supplied by the generator attached on the same bus (see Fig. \ref{fig:flow_example}(b)), which is assigned the highest generator's carbon emission rate. Node 6 reaches the smallest carbon emission rate when the load level is high, with both average and marginal rates under 500 lbs CO$_2$/MWh. For all nodes, the average emission rate is decreasing with the increase of load level. This is because generators with lower emission rates are contributing more to meet the higher demands. An interesting case is for Node 2, where power is predominately contributed by the generator attached at node 1 when load is in $[20\%, 80\%]$ variation range. But both the locational average and locational marginal emission rates drop when load reach to $100\%$. This is caused by the reversed line flow from $p_{25}$ and $p_{26}$ to $p_{52}$ and $p_{62}$, which switches part of Node 2's load contributor. There is not a clear pattern on whether the average rate is higher than the marginal rate, or the other way around. Both rates are affected by generator's contribution with respect to the given loads and line flow conditions.

Furthermore, we calculate the mean and standard deviation of all nodes' carbon emission rate, and show in the leftmost subplot of Fig. \ref{fig:6bus}. It can be observed that all nodes' emission rate are decreasing when load becomes larger. 
The average emission rate is consistently larger than the marginal emission rate, possibly due to the fact that generators with lower emission rate start to contribute (as the marginal generators) when we reach higher load level. For both carbon metrics, we notice the standard deviation is quite large, indicating huge differences for the nodal carbon emission patterns. 

In Fig. \ref{fig:30bus}(a), we show that the total generation costs grows faster as load becomes larger, while the total carbon emissions grows relatively slower. While in Fig. \ref{fig:30bus}(b), we observe the similar trend of 30-bus and 6-bus system with regard to the average and marginal emission rates. An interesting observation is, it has long been perceived that locational marginal price (LMP) will become higher along with the increase of system demand, while it does not hold for nodal marginal or average emission rates.

In all of our simulations, we validate our algorithm via  (i). for each node $i$, summing up nodal contributions of every generator excluding net power flow equal to the nodal demand $p_i^d$, so that demand's carbon emission can be traced back to perspective generators; and (ii). for each generator $j$, summing up all nodes' contributions made by $j$-th generator equal to the generation $p_j^g$. We find our proposed algorithm is computationally very efficient, using $0.011$ and $0.0022$ seconds on average for 30-bus and 6-bus simulation respectively on a 2018 Macbook Pro. This tool can be accompanied with power flow analysis, and add acceptable computation for carbon-related operations (e.g., carbon accounting) within short decision time windows.
It is also worthwhile to mention one byproduct of our approach is getting the exact carbon emission contribution made by single generator and line flow. Such information will be also helpful for decision makers and policymakers in finding carbon reduction strategies, like reducing carbon emissions for targeted generators or particular regions.

\section{Conclusions and Future Work}
In this paper, an efficient framework is proposed to quantify the nodal average carbon emission and marginal carbon emission rates in a power network. Our method is based on calculating each generator's contribution to the nodal demand by depth-first tree search with each generator set as tree's root node. The proposed approach works with power networks of any topology and parameters, and it can also provide useful information on every generator's transferred carbon emissions to each demand node due to power consumption.

In future work, there are several interesting and important topics related to carbon emission calculation and profiling. First, we will consider analyzing temporal trend of carbon emission under multi-period scenarios where both electricity demand and generation constraints are temporally correlated. 
Moreover, our current work assumes perfect information of power flow and generations. We will consider how to recover the generators' working conditions and associated carbon emission rates, and also work with practical electricity market data to reveal real-world nodal carbon emissions. In addition, we will analyze the renewables' effects on the nodal carbon emission profile, and find grid planning solutions that better accommodate our proposed carbon evaluation metrics.

\bibliographystyle{ACM-Reference-Format}
\bibliography{sample-base}

\newpage
\appendix

\section{Optimal Power Flow Formulation}
\label{sec:opf}

In this section, we list one formulation of DCOPF, which is commonly used by independent system operators (ISO) for finding the power dispatch and line flow solution to minimize system costs~\cite{zhu2015optimization, chen2022learning}:

\begin{subequations}
\begin{align}
\min _{\mathbf{\theta}, \; \mathbf{p}^g} \quad & \mathbf{c}^T \mathbf{p}^g \\
\text { s.t. } \quad &  p^g_i- p^d_i=  \sum_{j:(i, j) \in \mathcal{L}} \beta_{i j}\left(\theta_i-\theta_j\right), & \forall i \in \mathcal{N} \label{opf:balance} \\
& \underline{p_{i j}} \leq- \beta_{i j}\left(\theta_i-\theta_j\right) \leq \overline{p_{i j}} , \quad  \forall(i, j) \in \mathcal{L} \label{opf:line} \\
& \underline{\mathbf{p}^g} \leq \mathbf{p}^g \leq \overline{\mathbf{p}^g} \label{opf:gen} \\
& \theta_{\text {ref }}=0 . 
\end{align}
\end{subequations}

In this model, $\mathbf{c}$ denotes the cost vector of generators. $\theta_i$ denotes the phase angle at node $i$, and we have each line flow $p_{ij}=\beta_{i j}\left(\theta_i-\theta_j\right)$. Constraint \eqref{opf:balance}-\eqref{opf:gen} denote the nodal power balance, the line flow limits, and the power generation limits respectively. Once solved, we can get the power generation $\mathbf{p}^g$ as well as line flow $\mathbf{p}_{line}$. Then we are able to implement our algorithm for nodal carbon emission rate calculation.

We note that our approach shall not be restricted to the linearized DCOPF problem or linear generation costs. As long as we know the line flow and generation value, we can implement our algorithm to find nodal carbon emissions. Our approach can be also extended to consider the power losses on lines by examining each generator's relationship with the line loss.

One notable characteristic in the optimal power flow model is the locational marginal price, which quantifies the price sensitivities of nodal demand. The locational marginal carbon emission rate can be understood as the LMP equivalent in terms of carbon emissions. In general, OPF gives solution with lower LMP when total network load is small, and the LMP becomes higher when there are line congestion or more expensive generators are used for satisfying the demands. But on the contrary to the situation of LMP, as illustrated in Fig. \ref{fig:30bus}, locational marginal emission rate is in general decreasing when net load increases. More interestingly, we observe that when some node increases demand, some other nodes' average carbon emission rate $\delta(p_i^d)$ even decreases. This is due to the fact that the more expensive generation such as gas-fueled generators usually have lower emission rate $\gamma_k$.

\section{Load Regions and Locational Marginal Carbon Emissions}
\label{sec:region}

\subsubsection*{Load Regions with Same Generation Policy} It has been shown in literature~\cite{ji2016probabilistic} that for DCOPF problem, there exists a neighborhood $\mathcal{N}(\mathbf{p}^d)$ around load $\mathbf{p}^d$, where the optimal generation policy $\mathbf{p}^g=f^*(\mathbf{p}^d)$ holds the same. More importantly, within $\mathcal{N}(\mathbf{p}^d)$, the marginal generators and line flow patterns are  the same, making the dispatch rule unchanged. Thus as long as the small perturbation $\epsilon$ introduced in Sec \ref{sec:marginal} does not change the load region, marginal carbon emission rate computation via perturbation method is exact.

\subsubsection*{Relationship to Power Transfer Distribution Factor} In power flow studies, the Power Transfer Distribution Factors (PTDF) indicate the relationship between line flow and nodal power injections. Specifically, denote $\mathbf{F} \in \mathbb{R}^{N\times L}$ to be the power flow distribution factor matrix, where $\mathbf{F}_{il}$ determines how a power injection at node $i \in [1, N]$ affects power flow across line $l \in [1,L]$. In the literature, there is also formulation of OPF problem using PTDF matrix to replace nodal power balance constraint \eqref{opf:balance} and line flow limit \eqref{opf:line} with $\underline{\mathbf{p}}_{line} \leq \mathbf{F}(\mathbf{p}^g-\mathbf{p}^d)\leq \overline{\mathbf{p}}_{line}$.

Moreover, it is interesting to note that by physical interpretations, PTDF matrix can be used to calculate the mapping between net power injection and line flow. However, there are two intrinsic issues associated with using PTDF matrix to calculate the line flow contributions from each generator. If generators' participation factor changes, the PTDF matrix is different, and needs to go through the whole process to recalculate $F$. Another issue is to calculate PTDF, matrix inversion is usually necessary, while under cases such as a group of generators are connected to the rest of the system through a single node, it is not invertible. We refer to \cite{cheng2005ptdf} for a more detailed discussions on the PTDF formulation and applicable scenarios. Moreover, once PTDF matrix is derived, it is still necessary to calculate nodal emission rates $\mu_i$ or $\delta_i$ based on nodal line flow injections and generation injections. Our proposed method can be applied to the general case as long as the power network is a connected graph. Moreover, for any power flow sample, we can quickly solve each generator's carbon emission contribution without going through the calculation process of PTDF and matrix inversion.

\section{Discussion on Cycles}
\label{sec:cycle}
In the algorithm described in Sec. \ref{sec:method}, we note that when there is no directed cycle in the reachable set of the generator $k$, it is then valid to implement the depth-first tree search with termination conditions. Here we show that for DC power flow model, such directed cycle does not exist.

\begin{proposition}
Given network demand $\mathbf{p}^d$ and generation $\mathbf{p}^g$, the reachable set $\mathcal{S}_g$ of $p_k^g$ does not contain directed cycle.
\end{proposition}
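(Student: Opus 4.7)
The plan is to argue by contradiction, exploiting the fact that under the DC power flow model of Appendix~\ref{sec:opf}, line flow direction is determined by the sign of a phase-angle difference, which induces a strict partial order on the nodes visited by any directed path.

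First I would suppose, for contradiction, that the reachable set $\mathcal{S}_k$ contains a directed cycle $i_1 \to i_2 \to \cdots \to i_m \to i_1$, where by ``directed edge $i \to j$'' we mean that the physical line flow satisfies $p_{ij} > 0$ in the solution returned by the dispatch model. Recall that in the DCOPF formulation, each line flow is given by $p_{ij} = \beta_{ij}(\theta_i - \theta_j)$ with line susceptance $\beta_{ij} > 0$. Hence $p_{ij} > 0$ is equivalent to $\theta_i > \theta_j$.

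Next I would chain these inequalities along the assumed cycle to obtain
\begin{equation*}
\theta_{i_1} > \theta_{i_2} > \cdots > \theta_{i_m} > \theta_{i_1},
\end{equation*}
which is a contradiction. This shows no directed cycle can exist among edges carrying strictly positive flow, and in particular not within $\mathcal{S}_k$, whose directed edges are by construction taken from this set. The termination of the depth-first search described in Section~\ref{sec:method} then follows immediately, since the reachable set of generator $k$ is a finite directed acyclic graph rooted at node $k$.

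The only subtlety I would flag is the treatment of lines with exactly zero flow, which correspond to $\theta_i = \theta_j$ and therefore cannot appear as directed edges in $\mathcal{S}_k$ (their contribution $p_{hi}(p_k^g)$ would be zero anyway, so excluding them is harmless). A secondary point worth noting is that the argument uses only the existence of a scalar potential (the phase angle) whose differences determine flow directions; this is the real structural feature at play, so the same proof would extend to any linear power flow model admitting such a potential representation. I do not foresee a substantial obstacle: the main work is simply making precise what ``directed'' means in $\mathcal{S}_k$ and invoking the phase-angle ordering.
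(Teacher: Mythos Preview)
Your proposal is correct and essentially the same as the paper's proof: both argue by contradiction using the DC relation $p_{ij}=\beta_{ij}(\theta_i-\theta_j)$ with $\beta_{ij}>0$, so that positive directed flows force a strict phase-angle decrease along the cycle, yielding $\theta_{i_1}>\theta_{i_1}$ (the paper phrases this as the telescoping sum $\sum_\ell p_\ell/b_\ell=0$ being strictly positive, which is the same contradiction). Your added remarks on zero-flow lines and the potential-function interpretation are sound refinements but not needed for the core argument.
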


\begin{proof}
We prove it by contradiction. Suppose there exists a directed path in the network which forms a loop of size $N_c$, e.g., $\{$bus 1, bus 2, ..., bus $N_c$, bus 1$\}$,  we have 
    \begin{equation}
        \begin{aligned}
 &\left(\theta_1-\theta_2\right)+\left(\theta_2-\theta_3\right)+\cdots+\left(\theta_{N_c}-\theta_1\right) \\
 =&\frac{p_{12}}{b_{12}}+\frac{p_{23}}{b_{23}}+\cdots+\frac{p_{n_c 1}}{b_{N_c 1}} \\
= & 0 .
\end{aligned}
\end{equation}
However, since all $b_{ij}$ and directed flow $p_{ij}$ are non-negative, the summation $\frac{p_{12}}{b_{12}}+\frac{p_{23}}{b_{23}}+\cdots+\frac{p_{n_c 1}}{b_{N_c 1}} >0$, which cannot be zero.
Thus there is no directed cycle in the network flow.
\end{proof}

It is worth noting that our method can work well with standard power dispatch and operating procedures. In full AC power flow model, cycles may exist, and future steps on cycle eliminations or reductions of cycle flow may be needed for analysis.

\section{Variable Table and Simulation Settings}
\label{sec:setting}
In the following table, we collect the notations and definitions of  variables used in the paper.
\begin{center}
\begin{tabular}{c c } 
 \hline
 Notations  & Definition \\ [0.5ex] 
 \hline
  $p_{ij}(p_k^g)$ &  \makecell{Power flow on the line between node $i$ and \\node $j$ due to generator $k$ }\\ 
  $p_{i}^d (p_k^g)$ &\makecell{Generator $k$'s contribution to supply load $p_i^d$}\\
  $e(p_i^d)$ &\makecell{Total nodal carbon emission flow rate}\\
  $\gamma_k$ & \makecell{Generator $k$'s carbon emission rate}\\
  $\delta(p_i^d)$ & \makecell{Node $i$'s average carbon emission rate at load $p_i^d$}\\
  $\mu (p_i^d)$ & \makecell{Node $i$'s marginal carbon emission rate at load  $p_i^d$}\\
  $\mathcal{S}_k$ & \makecell{Reachable set of generator $k$}\\
 \hline
\end{tabular}
\end{center}

\begin{figure}[]
\centering
\includegraphics[width=0.45\textwidth]{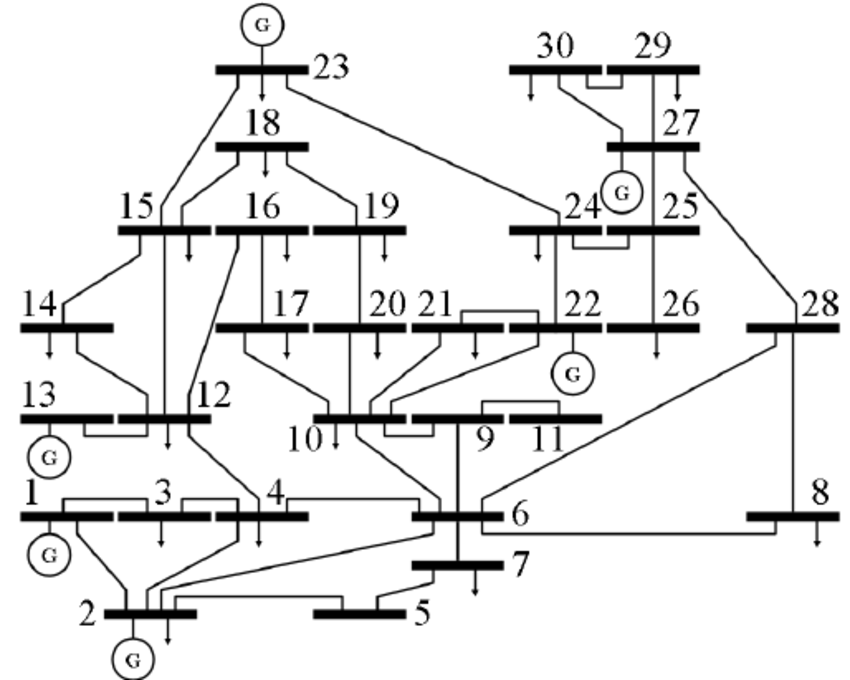}
\caption{IEEE 30-bus system test case.}
\label{fig:30bus_map}
\end{figure}

In the following, we describe the settings for the OPF problem used for solving power dispatch of 6-bus and 30-bus system. 

\textbf{6-bus system}: The generators are located at bus 1, bus 3, and bus 6,  with cost vector $\mathbf{c}=[100, 150, 240]^T$ pew MW, and carbon emission rate vector $\mathbf{\gamma}=[2388, 904, 321]^T$.

\textbf{30-bus system}: The generators are located at bus 1, bus 2, bus 13, bus 22, bus 23, and bus 27,with cost vector $\mathbf{c}=[100, 150, 240, 350, 500, 300]^T$ per MW, carbon emission rate vector $\mathbf{\gamma}=[2159, 2002, 1611, 890, 577, 113]^T$. The system topology map is shown in Fig. \ref{fig:30bus_map}.

Note that in both systems, the general trend is generator with lower costs have higher marginal emission rate, which also generally holds for fossil fuel generations in power networks. We note that we did not simulate renewable generations, as they can be firstly absorbed into net load and then solve the OPF problem with only fossil fuel generators involved.

\end{document}